\documentclass[a4paper,10pt]{article}
\usepackage[english]{babel}
\usepackage[T1]{fontenc}
\usepackage[utf8]{inputenc}
\usepackage{amssymb}
\usepackage{physics}
\usepackage{amsmath}
\usepackage{amsthm}
\newtheorem{theorem}{Theorem}
\newtheorem{proposition}{Proposition}
\newtheorem{lemma}{Lemma}
\usepackage{graphicx}
\usepackage[unicode]{hyperref}
\usepackage{tikz}
\usepackage{color}
\usepackage{genyoungtabtikz}
\usepackage[title]{appendix}
\usepackage{indentfirst}
\usepackage{bm}
\usepackage{xhfill}
\usepackage{bbm}
\usepackage{multirow}
\usepackage{array}
\usepackage[centertableaux]{ytableau}
\usepackage{multicol}
\usepackage[shortlabels]{enumitem}
\usepackage{forest}
\usepackage{float}
\usepackage{mathrsfs}

\usetikzlibrary{decorations.pathreplacing}

\usepackage{ytableau}

\hypersetup{
    colorlinks=true,
    linkcolor=blue,
    filecolor=magenta,      
    urlcolor=cyan,
}
 
\usepackage{amsthm}

\theoremstyle{definition}

\usepackage[nottoc]{tocbibind}

\usepackage{authblk}

\title{\textbf{Analytic Properties of the Jost Functions via the Poincaré–Picard Theorem}}

\author[1,2]{\textbf{Yannick Mvondo-She}}
\affil[1]{National Institute of Theoretical and Computational Sciences, Private Bag X1, Matieland, South Africa}
\affil[2]{Department of Physics, University of Pretoria, Private Bag X20, Hatfield 0028, South Africa}

\date{}

\begin{document}

\maketitle

\begin{abstract}

The analytic properties of the Jost functions play a fundamental role in quantum scattering theory, particularly in the study of bound states, resonances, and the analytic continuation of the scattering matrix into the complex energy plane. In the present work, the analyticity of the Jost functions is investigated from the perspective of parameter-dependent ordinary differential equations. Starting from the radial Schrödinger equation for a short-range central potential, a coupled first-order differential system for the coefficient functions associated with the Ricatti--Bessel and Ricatti--Neumann solutions is derived using the method of variation of parameters.

The multivalued character of the Jost functions is shown to originate from the square-root relation between the complex energy and the momentum variable,
\begin{equation*}
k=\sqrt{\frac{2\mu E}{\hbar^2}},
\end{equation*}
which generates a two-sheeted Riemann surface with a branch point at the scattering threshold. By explicitly factorizing the momentum dependence from the Ricatti functions and from the coefficient functions themselves, the original scattering problem is transformed into a differential system whose coefficients are single-valued analytic functions of the complex energy variable.

The transformed system is subsequently rewritten in matrix form, allowing the direct application of the Poincaré--Picard theorem on analytic dependence of solutions of ordinary differential equations on complex parameters. It is thereby proven that the transformed Jost functions are single-valued analytic functions of the energy variable for finite radial distance.

The geometric interpretation of the factorization procedure is further discussed in terms of the topology of the energy Riemann surface and the analytic continuation between its different sheets. The present approach provides a mathematically transparent differential-equation-based framework for the analytic structure of the Jost functions and establishes a direct connection between quantum scattering theory, analytic continuation, and classical theorems on parameter-dependent differential equations.

\end{abstract}

\tableofcontents

\section{Introduction}

The analytic structure of scattering amplitudes occupies a central position in quantum scattering theory. Since the pioneering works of Jost, Newton, Regge, and many others, the study of analytic continuation in the complex energy and momentum planes has provided deep insight into the mathematical foundations of resonances, bound states, and the structure of the scattering matrix \cite{jost1946,newton1982,taylor1972,regge1958}.

Among the fundamental objects of scattering theory are the Jost functions,
\begin{equation}
f_l^{(\mathrm{in})}(E),
\qquad
f_l^{(\mathrm{out})}(E),
\end{equation}
which determine the asymptotic behavior of the radial wave function and encode the analytic properties of the scattering matrix. In particular, the partial-wave \(S\)-matrix may be written as
\begin{equation}
S_l(E)
=
\frac{
f_l^{(\mathrm{out})}(E)
}{
f_l^{(\mathrm{in})}(E)
},
\end{equation}
so that poles of the Jost functions correspond to bound states, virtual states, and resonances \cite{newton1982,taylor1972}. The analytic continuation of the Jost functions into the complex energy plane is therefore of considerable physical and mathematical importance. Resonance phenomena, decay widths, and threshold behavior are all intimately connected with the analytic structure of these functions on the Riemann surface generated by the momentum variable \cite{eden1966,newton1982}.

The origin of this nontrivial analytic structure lies in the relation between the energy and momentum,
\begin{equation}
k
=
\sqrt{\frac{2\mu E}{\hbar^2}},
\label{intro_energy_momentum}
\end{equation}
which introduces a square-root branch point at
\begin{equation}
E=0.
\end{equation}
Consequently, the scattering problem is naturally defined not on the complex energy plane itself, but rather on a two-sheeted Riemann surface \cite{taylor1972,newton1982}. Analytic continuation around the branch point changes the sign of the momentum,
\begin{equation}
k \longrightarrow -k,
\end{equation}
thereby connecting different sheets of the energy surface.

The study of analytic properties of scattering quantities has traditionally been developed through integral equation methods. In particular, the analytic behavior of the Jost solutions is commonly established through the Volterra integral equations satisfied by the radial wave functions \cite{newton1982,taylor1972,faddeev1965}. These approaches have proven extremely successful and form the standard framework of analytic scattering theory. Nevertheless, the analytic dependence of solutions of differential equations on complex parameters provides an alternative and conceptually attractive route to analyticity. Classical results due to Poincaré and Picard establish that solutions of ordinary differential equations depend analytically on external parameters whenever the coefficients of the differential system possess the appropriate analyticity properties \cite{hille1976,coddington1955,ince1956,hartman1964,wasow1965}. Despite the generality and power of these theorems, their application to the analytic structure of the Jost functions has received comparatively little attention.

The purpose of the present work is precisely to investigate the analytic properties of the Jost functions from the perspective of parameter-dependent differential equations. The central idea is to isolate explicitly the multivalued dependence generated by the square-root momentum variable and reformulate the scattering problem in terms of transformed functions whose differential equations possess coefficients analytic in the complex energy variable.

The method employed in the present work is closely related to the factorization approach developed in the context of Jost-matrix representations and resonance theory \cite{rakityansky2012,rakityansky2013}. A modern and comprehensive treatment of the Jost-function formalism, including multichannel scattering theory, resonance phenomena, and analytic continuation techniques, may be found in the monograph \cite{rakityansky2022}. Related aspects of the analytic structure of the Jost matrix, resonance phenomena, and multichannel scattering theory were also investigated in the author's Master's thesis \cite{mvondothesis}. In particular, it was shown that the branching structure associated with the momentum variable can be separated explicitly from the single-valued analytic part of the Jost matrices. The present work revisits this idea from a different perspective and places it within the framework of the Poincaré--Picard theorem.

More specifically, we begin with the radial Schrödinger equation for a short-range central potential and derive a first-order system for the coefficient functions associated with the Ricatti--Bessel and Ricatti--Neumann solutions. We then show that the nonanalytic behavior of the Jost functions originates entirely from explicit odd powers of the momentum variable. By factorizing these momentum-dependent terms, we construct transformed coefficient functions
\begin{equation}
\widetilde{A}_l(E,r),
\qquad
\widetilde{B}_l(E,r),
\end{equation}
whose differential equations involve only coefficients analytic in the complex energy variable. This transformation removes the branching structure associated with the square-root mapping and allows the direct application of the Poincaré--Picard theorem. The analyticity of the transformed Jost functions then follows as a consequence of the analytic dependence of solutions of ordinary differential equations on complex parameters. In this way, the analytic structure of the scattering problem emerges naturally from classical theorems of differential equation theory.

Beyond its mathematical interest, the present approach also provides a transparent geometric interpretation of the energy Riemann surface. The factorization procedure isolates explicitly the nontrivial monodromy associated with analytic continuation around the branch point and separates it from the single-valued analytic part of the scattering solutions.

The principal result established in the present work may be summarized as follows: after explicit factorization of the momentum-dependent branching terms, the transformed Jost functions satisfy a first-order differential system whose coefficients are analytic functions of the complex energy variable. The Poincaré--Picard theorem then implies that the transformed Jost functions are single-valued analytic functions of the energy.

The paper is organized as follows. In Section~2, the radial Schrödinger equation and the Jost functions are introduced, and the first-order differential system governing the coefficient functions is derived using the method of variation of parameters. In Section~3, the analytic structure of the energy variable is examined in detail, with particular emphasis on the square-root branching singularity and the associated two-sheeted Riemann surface. In Section~4, the explicit momentum dependence is factorized from the Ricatti functions and from the coefficient functions themselves, leading to a transformed differential system analytic in the energy variable. In Section~5, the transformed system is rewritten in matrix form and the Poincaré--Picard theorem is applied to establish the analyticity of the transformed Jost functions. In Section~6, the geometric interpretation of the factorization procedure is discussed in terms of the topology of the energy Riemann surface and the analytic continuation between different sheets. Finally, Section~7 summarizes the main results and discusses possible extensions of the present formalism to multichannel scattering problems, long-range interactions, and resonance theory.

The present work therefore provides a differential-equation-based approach to the analytic structure of the Jost functions and establishes a direct connection between scattering theory, parameter-dependent ordinary differential equations, and the geometry of complex analytic continuation.

\section{Radial Schrödinger Equation and Jost Functions}

We consider the radial Schrödinger equation describing the relative motion of two interacting particles in a central potential \(V(r)\). After separation of the angular variables, the radial wave function \(u_l(E,r)\) satisfies the differential equation
\begin{equation}
\left[
\frac{d^2}{dr^2}
+
k^2
-
\frac{l(l+1)}{r^2}
-
V(r)
\right]
u_l(E,r)
=
0,
\label{radial_schrodinger}
\end{equation}
where \(l\) denotes the orbital angular momentum quantum number and
\begin{equation}
k
=
\sqrt{\frac{2\mu E}{\hbar^2}}
\label{momentum_definition}
\end{equation}
is the complex momentum associated with the complex energy \(E\). Here \(\mu\) is the reduced mass of the system. Throughout the present work, the potential \(V(r)\) will be assumed to be short-ranged, namely,
\begin{equation}
\lim_{r\to\infty} V(r)=0,
\end{equation}
sufficiently rapidly so that the asymptotic solutions of Eq.~\eqref{radial_schrodinger} coincide with those of the free radial equation.

\subsection{Free radial solutions}

In the asymptotic region where \(V(r)\to 0\), Eq.~\eqref{radial_schrodinger} reduces to
\begin{equation}
\left[
\frac{d^2}{dr^2}
+
k^2
-
\frac{l(l+1)}{r^2}
\right]
u_l(E,r)
=
0.
\label{free_radial_equation}
\end{equation}
The two linearly independent solutions of Eq.~\eqref{free_radial_equation} are the Ricatti--Bessel and Ricatti--Neumann functions,
\begin{equation}
j_l(kr),
\qquad
y_l(kr).
\end{equation}
From these functions one defines the Ricatti--Hankel functions
\begin{equation}
h_l^{(\pm)}(kr)
=
j_l(kr)
\pm
i y_l(kr),
\label{ricatti_hankel}
\end{equation}
which represent outgoing and incoming spherical waves respectively. Using the asymptotic behavior of the spherical Bessel functions, one obtains
\begin{equation}
h_l^{(\pm)}(kr)
\sim
(\mp i)^{l+1} e^{\pm i kr},
\qquad
r\to\infty.
\label{hankel_asymptotics}
\end{equation}
Equation~\eqref{hankel_asymptotics} shows explicitly that
\begin{equation}
h_l^{(+)}(kr)
\end{equation}
describes an outgoing wave, while
\begin{equation}
h_l^{(-)}(kr)
\end{equation}
describes an incoming wave.

\subsection{Definition of the Jost functions}

The regular solution of Eq.~\eqref{radial_schrodinger} is characterized by the boundary condition
\begin{equation}
u_l(E,0)=0.
\label{regular_boundary}
\end{equation}
Since the Ricatti--Hankel functions form a fundamental set of solutions in the asymptotic region, the regular solution can be written asymptotically as
\begin{equation}
u_l(E,r)
\underset{r\to\infty}{\sim}
f_l^{(\mathrm{in})}(E)
\, h_l^{(-)}(kr)
+
f_l^{(\mathrm{out})}(E)
\, h_l^{(+)}(kr),
\label{jost_asymptotics}
\end{equation}
where
\begin{equation}
f_l^{(\mathrm{in})}(E),
\qquad
f_l^{(\mathrm{out})}(E),
\end{equation}
are respectively called the incoming and outgoing Jost functions. The scattering matrix is then defined by
\begin{equation}
S_l(E)
=
\frac{
f_l^{(\mathrm{out})}(E)
}{
f_l^{(\mathrm{in})}(E)
}.
\label{smatrix_definition}
\end{equation}
The analytic structure of the Jost functions therefore determines the analytic properties of the scattering matrix and consequently the positions of bound states and resonances.

\subsection{Variation of parameters}

In order to derive a differential system for the Jost functions, we use the method of variation of parameters and write the regular solution in the form
\begin{equation}
u_l(E,r)
=
A_l(E,r)\, j_l(kr)
-
B_l(E,r)\, y_l(kr),
\label{variation_ansatz}
\end{equation}
where the functions
\begin{equation}
A_l(E,r),
\qquad
B_l(E,r),
\end{equation}
are now allowed to depend on the radial coordinate.
\\
The representation \eqref{variation_ansatz} is not unique. We therefore impose the auxiliary condition
\begin{equation}
\partial_r A_l(E,r)\, j_l(kr)
-
\partial_r B_l(E,r)\, y_l(kr)
=
0.
\label{auxiliary_condition}
\end{equation}
Differentiating Eq.~\eqref{variation_ansatz} with respect to \(r\) gives
\begin{align}
\partial_r u_l(E,r)
&=
\partial_r A_l(E,r)\, j_l(kr)
+
A_l(E,r)\, \partial_r j_l(kr)
\nonumber\\
&\quad
-
\partial_r B_l(E,r)\, y_l(kr)
-
B_l(E,r)\, \partial_r y_l(kr).
\end{align}
Using the auxiliary condition \eqref{auxiliary_condition}, this simplifies to
\begin{equation}
\partial_r u_l(E,r)
=
A_l(E,r)\, \partial_r j_l(kr)
-
B_l(E,r)\, \partial_r y_l(kr).
\label{first_derivative}
\end{equation}
Differentiating once more yields
\begin{align}
\partial_r^2 u_l(E,r)
&=
\partial_r A_l(E,r)\, \partial_r j_l(kr)
-
\partial_r B_l(E,r)\, \partial_r y_l(kr)
\nonumber\\
&\quad
+
A_l(E,r)\, \partial_r^2 j_l(kr)
-
B_l(E,r)\, \partial_r^2 y_l(kr).
\label{second_derivative}
\end{align}
Substituting Eqs.~\eqref{variation_ansatz} and \eqref{second_derivative} into Eq.~\eqref{radial_schrodinger}, and using the fact that \(j_l(kr)\) and \(y_l(kr)\) satisfy the free equation \eqref{free_radial_equation}, we obtain
\begin{equation}
\partial_r A_l(E,r)\, \partial_r j_l(kr)
-
\partial_r B_l(E,r)\, \partial_r y_l(kr)
=
V(r)
\left[
A_l(E,r)\, j_l(kr)
-
B_l(E,r)\, y_l(kr)
\right].
\label{intermediate_equation}
\end{equation}
Equations \eqref{auxiliary_condition} and \eqref{intermediate_equation} constitute a linear system for
\begin{equation}
\partial_r A_l(E,r),
\qquad
\partial_r B_l(E,r).
\end{equation}
Using the Wronskian relation
\begin{equation}
j_l(kr)\, \partial_r y_l(kr)
-
y_l(kr)\, \partial_r j_l(kr)
=
k,
\label{wronskian_relation}
\end{equation}
one finally obtains
\begin{equation}
\boxed{
\partial_r A_l(E,r)
=
-
\frac{y_l(kr)}{k}
V(r)
\left[
j_l(kr)\, A_l(E,r)
-
y_l(kr)\, B_l(E,r)
\right]
}
\label{A_equation}
\end{equation}
and
\begin{equation}
\boxed{
\partial_r B_l(E,r)
=
-
\frac{j_l(kr)}{k}
V(r)
\left[
j_l(kr)\, A_l(E,r)
-
y_l(kr)\, B_l(E,r)
\right].
}
\label{B_equation}
\end{equation}
The regularity condition at the origin implies
\begin{equation}
A_l(E,0)=1,
\qquad
B_l(E,0)=0.
\label{initial_conditions}
\end{equation}
For short-range potentials, the functions
\begin{equation}
A_l(E,r),
\qquad
B_l(E,r),
\end{equation}
approach finite limits as \(r\to\infty\). Comparing Eq.~\eqref{variation_ansatz} with the asymptotic representation \eqref{jost_asymptotics}, one obtains
\begin{equation}
f_l^{(\mathrm{in})}(E)
=
\frac{1}{2}
\left[
A_l(E,\infty)
+
i B_l(E,\infty)
\right],
\label{incoming_jost}
\end{equation}
and
\begin{equation}
f_l^{(\mathrm{out})}(E)
=
\frac{1}{2}
\left[
A_l(E,\infty)
-
i B_l(E,\infty)
\right].
\label{outgoing_jost}
\end{equation}
Equations \eqref{A_equation} and \eqref{B_equation} constitute the fundamental first-order system from which the analytic properties of the Jost functions will be studied in the following sections.

\section{Analytic Structure of the Energy Variable}

In the previous section, the Jost functions were introduced through the asymptotic representation of the regular solution of the radial Schrödinger equation. The present section is devoted to the analytic structure of these functions as functions of the complex energy variable \(E\). The central mathematical difficulty originates from the relation between the energy and the momentum,
\begin{equation}
k
=
\sqrt{\frac{2\mu E}{\hbar^2}},
\label{energy_momentum_relation}
\end{equation}
which introduces a square-root dependence on the complex energy. As will be shown below, this relation naturally leads to a branching structure in the complex energy plane and consequently to the multivalued character of the Jost functions.

\subsection{Complex energy and multivaluedness}

Let the complex energy be written in polar form,
\begin{equation}
E
=
|E| e^{i\theta},
\qquad
-\infty < \theta < \infty.
\label{polar_energy}
\end{equation}
Substituting Eq.~\eqref{polar_energy} into Eq.~\eqref{energy_momentum_relation} gives
\begin{align}
k
&=
\sqrt{\frac{2\mu}{\hbar^2}}
\sqrt{|E|e^{i\theta}}
\nonumber\\
&=
\sqrt{\frac{2\mu}{\hbar^2}}
|E|^{1/2}
e^{i\theta/2}.
\label{complex_momentum}
\end{align}
Equation~\eqref{complex_momentum} immediately shows that the momentum depends on the half-angle \(\theta/2\). Consequently, when the energy variable performs one complete rotation around the origin,
\begin{equation}
\theta
\longrightarrow
\theta + 2\pi,
\end{equation}
the momentum transforms according to
\begin{align}
k
&\longrightarrow
\sqrt{\frac{2\mu}{\hbar^2}}
|E|^{1/2}
e^{i(\theta+2\pi)/2}
\nonumber\\
&=
\sqrt{\frac{2\mu}{\hbar^2}}
|E|^{1/2}
e^{i\theta/2}
e^{i\pi}
\nonumber\\
&=
-k.
\label{momentum_sign_change}
\end{align}
Therefore, after one complete circuit around the origin in the energy plane, the momentum does not return to its original value. Instead, its sign changes. Only after two complete rotations,
\begin{equation}
\theta
\longrightarrow
\theta + 4\pi,
\end{equation}
does one recover the original value of \(k\). This behavior demonstrates that the momentum is not a single-valued function of the energy. The point
\begin{equation}
E=0
\end{equation}
is therefore a branch point of the mapping
\begin{equation}
E \mapsto k(E).
\end{equation}

\subsection{Two-sheeted Riemann surface}

The multivaluedness of the square-root function can be understood geometrically through the construction of a Riemann surface. To this end, consider again Eq.~\eqref{complex_momentum},
\begin{equation}
k
=
\sqrt{\frac{2\mu}{\hbar^2}}
|E|^{1/2}
e^{i\theta/2}.
\end{equation}
If the argument \(\theta\) is restricted to the interval
\begin{equation}
0 \leq \theta < 2\pi,
\label{first_sheet_interval}
\end{equation}
one obtains one branch of the square-root function. This branch defines the first sheet of the energy Riemann surface. Similarly, restricting the argument to
\begin{equation}
2\pi \leq \theta < 4\pi
\label{second_sheet_interval}
\end{equation}
defines a second branch corresponding to the second sheet. The complete square-root function therefore requires two copies of the complex energy plane connected continuously at the branch point \(E=0\). The resulting surface is called the two-sheeted energy Riemann surface. On the first sheet one has
\begin{equation}
k(E)
=
+\sqrt{\frac{2\mu E}{\hbar^2}},
\label{physical_sheet}
\end{equation}
whereas on the second sheet,
\begin{equation}
k(E)
=
-\sqrt{\frac{2\mu E}{\hbar^2}}.
\label{nonphysical_sheet}
\end{equation}
A convenient way to visualize the construction consists in introducing a branch cut in the energy plane. In scattering theory, the branch cut is usually chosen along the positive real energy axis,
\begin{equation}
E \in [0,\infty),
\end{equation}
so that the two sheets become connected continuously across the cut. The first sheet is commonly referred to as the physical sheet, while the second sheet is called the nonphysical sheet. Resonance poles of the scattering matrix are generally located on the second sheet.

\subsection{Consequences for the Jost functions}

The Jost functions inherit the multivaluedness of the momentum because they depend explicitly on the Ricatti--Hankel functions,
\begin{equation}
h_l^{(\pm)}(kr)
=
j_l(kr)
\pm
i y_l(kr),
\label{ricatti_hankel_recall}
\end{equation}
whose arguments involve the momentum \(k\). To understand the dependence more explicitly, consider the power-series representations of the Ricatti--Bessel and Ricatti--Neumann functions. One has
\begin{equation}
j_l(kr)
=
\sqrt{\pi}
\sum_{n=0}^{\infty}
\frac{
(-1)^n
}{
n!\,
\Gamma\left(l+n+\frac{3}{2}\right)
}
\left(
\frac{kr}{2}
\right)^{2n+l+1},
\label{ricatti_bessel_series}
\end{equation}
and
\begin{equation}
y_l(kr)
=
\sum_{n=0}^{\infty}
c_n
(kr)^{2n-l},
\label{ricatti_neumann_series}
\end{equation}
where \(c_n\) are numerical coefficients independent of the energy. Equation~\eqref{ricatti_bessel_series} shows that \(j_l(kr)\) contains odd powers of the momentum,
\begin{equation}
k^{2n+l+1},
\end{equation}
while Eq.~\eqref{ricatti_neumann_series} contains powers of the form
\begin{equation}
k^{2n-l}.
\end{equation}
Under the transformation
\begin{equation}
k \longrightarrow -k,
\end{equation}
generated by one complete circuit around the branch point, these terms acquire phase changes. Consequently, the Jost functions do not remain invariant under analytic continuation around \(E=0\). The Jost functions are therefore multivalued functions of the complex energy variable.

\subsection{Origin of the branching structure}

The branching structure may now be understood in a precise way. The radial Schrödinger equation itself,
\begin{equation}
\left[
\frac{d^2}{dr^2}
+
k^2
-
\frac{l(l+1)}{r^2}
-
V(r)
\right]
u_l(E,r)
=
0,
\end{equation}
depends only on
\begin{equation}
k^2
=
\frac{2\mu E}{\hbar^2},
\end{equation}
which is single-valued in the energy variable. Hence the differential equation itself possesses no branching singularity. The multivaluedness arises instead from the asymptotic decomposition into incoming and outgoing waves,
\begin{equation}
e^{\pm ikr},
\end{equation}
or equivalently from the Ricatti--Hankel functions. The distinction between incoming and outgoing solutions therefore introduces the square-root branching structure into the theory.

\subsection{Motivation for factorization}

The previous analysis shows that the nonanalytic behavior of the Jost functions originates entirely from the odd powers of the momentum contained in the Ricatti functions. This observation suggests that the multivalued factors may be isolated explicitly. More precisely, one seeks representations of the form
\begin{equation}
j_l(kr)
=
k^{l+1}
\widetilde{j}_l(E,r),
\label{factorized_bessel}
\end{equation}
and
\begin{equation}
y_l(kr)
=
k^{-l}
\widetilde{y}_l(E,r),
\label{factorized_neumann}
\end{equation}
where the functions
\begin{equation}
\widetilde{j}_l(E,r),
\qquad
\widetilde{y}_l(E,r),
\end{equation}
depend only on even powers of the momentum and are therefore single-valued analytic functions of the energy.

The preceding discussion shows that the nonanalyticity of the scattering problem originates entirely from explicit odd powers of the momentum variable. This observation suggests that the branching structure may be isolated through an explicit factorization of the momentum dependence. The purpose of the next section is precisely to implement this factorization procedure and derive a transformed first-order differential system whose coefficients are analytic functions of the complex energy variable.

\section{Factorization of the Momentum Dependence}

In the previous section it was shown that the multivalued character of the Jost functions originates from the square-root dependence
\begin{equation}
k
=
\sqrt{\frac{2\mu E}{\hbar^2}},
\end{equation}
which introduces odd powers of the momentum into the Ricatti--Bessel and Ricatti--Neumann functions.

The purpose of the present section is to isolate explicitly the nonanalytic dependence on the momentum and reformulate the first-order system obtained in Section~2 in terms of functions that are single-valued analytic functions of the complex energy variable \(E\).

\subsection{Factorization of the Ricatti--Bessel function}

We begin with the power-series representation of the Ricatti--Bessel function,
\begin{equation}
j_l(kr)
=
\sqrt{\pi}
\sum_{n=0}^{\infty}
\frac{
(-1)^n
}{
n!\,
\Gamma\left(l+n+\frac{3}{2}\right)
}
\left(
\frac{kr}{2}
\right)^{2n+l+1}.
\label{bessel_series_start}
\end{equation}
Each term in the series contains the factor
\begin{equation}
k^{2n+l+1}.
\end{equation}
Since
\begin{equation}
k^{2n+l+1}
=
k^{l+1}(k^2)^n,
\end{equation}
Eq.~\eqref{bessel_series_start} may be rewritten as
\begin{align}
j_l(kr)
&=
k^{l+1}
\sqrt{\pi}
\sum_{n=0}^{\infty}
\frac{
(-1)^n
}{
n!\,
\Gamma\left(l+n+\frac{3}{2}\right)
}
\left(
\frac{r}{2}
\right)^{2n+l+1}
(k^2)^n.
\label{bessel_rearranged}
\end{align}
Using the relation
\begin{equation}
k^2
=
\frac{2\mu E}{\hbar^2},
\label{k_squared_relation}
\end{equation}
one obtains
\begin{align}
j_l(kr)
&=
k^{l+1}
\sqrt{\pi}
\sum_{n=0}^{\infty}
\frac{
(-1)^n
}{
n!\,
\Gamma\left(l+n+\frac{3}{2}\right)
}
\left(
\frac{r}{2}
\right)^{2n+l+1}
\left(
\frac{2\mu E}{\hbar^2}
\right)^n.
\label{bessel_energy_series}
\end{align}
We now define the factorized Ricatti--Bessel function by
\begin{equation}
\boxed{
j_l(kr)
=
k^{l+1}
\widetilde{j}_l(E,r).
}
\label{factorized_j}
\end{equation}
The reduced function is therefore
\begin{equation}
\widetilde{j}_l(E,r)
=
\sqrt{\pi}
\sum_{n=0}^{\infty}
\frac{
(-1)^n
}{
n!\,
\Gamma\left(l+n+\frac{3}{2}\right)
}
\left(
\frac{r}{2}
\right)^{2n+l+1}
\left(
\frac{2\mu E}{\hbar^2}
\right)^n.
\label{reduced_bessel}
\end{equation}
Equation~\eqref{reduced_bessel} depends only on integer powers of \(E\). Consequently,
\begin{equation}
\widetilde{j}_l(E,r)
\end{equation}
is a single-valued analytic function of the complex energy variable.

\subsection{Factorization of the Ricatti--Neumann function}

We now perform the same analysis for the Ricatti--Neumann function. The series representation may be written schematically as
\begin{equation}
y_l(kr)
=
\sum_{n=0}^{\infty}
c_n
(kr)^{2n-l},
\label{neumann_series_start}
\end{equation}
where the coefficients \(c_n\) are independent of the energy. Each term contains the factor
\begin{equation}
k^{2n-l}.
\end{equation}
Since
\begin{equation}
k^{2n-l}
=
k^{-l}(k^2)^n,
\end{equation}
Eq.~\eqref{neumann_series_start} becomes
\begin{equation}
y_l(kr)
=
k^{-l}
\sum_{n=0}^{\infty}
c_n
r^{2n-l}
(k^2)^n.
\label{neumann_rearranged}
\end{equation}
Using Eq.~\eqref{k_squared_relation}, one obtains
\begin{equation}
y_l(kr)
=
k^{-l}
\sum_{n=0}^{\infty}
c_n
r^{2n-l}
\left(
\frac{2\mu E}{\hbar^2}
\right)^n.
\label{neumann_energy_series}
\end{equation}
We therefore define
\begin{equation}
\boxed{
y_l(kr)
=
k^{-l}
\widetilde{y}_l(E,r),
}
\label{factorized_y}
\end{equation}
with
\begin{equation}
\widetilde{y}_l(E,r)
=
\sum_{n=0}^{\infty}
c_n
r^{2n-l}
\left(
\frac{2\mu E}{\hbar^2}
\right)^n.
\label{reduced_neumann}
\end{equation}
Again, the reduced function depends only on integer powers of the energy variable. Hence,
\begin{equation}
\widetilde{y}_l(E,r)
\end{equation}
is single-valued and analytic in \(E\).

\subsection{Factorization of the coefficient functions}

We now return to the first-order system derived in Section~2,
\begin{equation}
\partial_r A_l(E,r)
=
-
\frac{y_l(kr)}{k}
V(r)
\left[
j_l(kr)\, A_l(E,r)
-
y_l(kr)\, B_l(E,r)
\right],
\label{original_A_equation}
\end{equation}
and
\begin{equation}
\partial_r B_l(E,r)
=
-
\frac{j_l(kr)}{k}
V(r)
\left[
j_l(kr)\, A_l(E,r)
-
y_l(kr)\, B_l(E,r)
\right].
\label{original_B_equation}
\end{equation}
The factorization of the Ricatti--Bessel and Ricatti--Neumann functions
shows that the explicit multivalued dependence on the energy variable
enters through powers of the momentum
\begin{equation}
k=\sqrt{\frac{2\mu E}{\hbar^2}}.
\end{equation}
Our goal is therefore to isolate this dependence explicitly while
preserving the physical normalization of the regular solution at the
origin. To achieve this, we leave the coefficient function \(A_l(E,r)\)
unchanged and introduce a transformed coefficient function
\(\widehat B_l(E,r)\) according to
\begin{equation}
\boxed{
\widehat A_l(E,r)
=
A_l(E,r),
}
\label{factorized_A}
\end{equation}
and
\begin{equation}
\boxed{
\widehat B_l(E,r)
=
k^{-(2l+1)}
B_l(E,r).
}
\label{factorized_B}
\end{equation}
Equivalently,
\begin{equation}
A_l(E,r)
=
\widehat A_l(E,r),
\end{equation}
and
\begin{equation}
B_l(E,r)
=
k^{2l+1}
\widehat B_l(E,r).
\label{inverse_B}
\end{equation}
Since the momentum depends only on the energy variable,
\begin{equation}
\partial_r k=0,
\end{equation}
one obtains
\begin{equation}
\partial_r A_l(E,r)
=
\partial_r \widehat A_l(E,r),
\label{A_derivative_transform}
\end{equation}
and
\begin{equation}
\partial_r B_l(E,r)
=
k^{2l+1}
\partial_r \widehat B_l(E,r).
\label{B_derivative_transform}
\end{equation}

\subsection{Derivation of the transformed system}

We now substitute the factorized expressions into
Eq.~\eqref{original_A_equation}. Using
\begin{equation}
j_l(kr)
=
k^{l+1}
\widetilde j_l(E,r),
\end{equation}
and
\begin{equation}
y_l(kr)
=
k^{-l}
\widetilde y_l(E,r),
\end{equation}
together with Eqs.~\eqref{factorized_A}
and~\eqref{inverse_B}, we obtain
\begin{align}
j_l(kr)A_l(E,r)
&=
k^{l+1}
\widetilde j_l(E,r)
\widehat A_l(E,r),
\\[1ex]
y_l(kr)B_l(E,r)
&=
k^{-l}
\widetilde y_l(E,r)
k^{2l+1}
\widehat B_l(E,r)
\nonumber\\
&=
k^{l+1}
\widetilde y_l(E,r)
\widehat B_l(E,r).
\end{align}
Hence,
\begin{equation}
j_lA_l-y_lB_l
=
k^{l+1}
\left[
\widetilde j_l\widehat A_l
-
\widetilde y_l\widehat B_l
\right].
\label{factorized_combination}
\end{equation}
Furthermore,
\begin{equation}
\frac{y_l(kr)}{k}
=
k^{-(l+1)}
\widetilde y_l(E,r).
\end{equation}
Substituting these expressions into
Eq.~\eqref{original_A_equation},
all powers of \(k\) cancel and one obtains
\begin{equation}
\boxed{
\partial_r \widehat A_l(E,r)
=
-
\widetilde y_l(E,r)
V(r)
\left[
\widetilde j_l(E,r)\widehat A_l(E,r)
-
\widetilde y_l(E,r)\widehat B_l(E,r)
\right].
}
\label{transformed_A_equation}
\end{equation}
We next consider Eq.~\eqref{original_B_equation}.
Since
\begin{equation}
\frac{j_l(kr)}{k}
=
k^l
\widetilde j_l(E,r),
\end{equation}
and
\begin{equation}
\partial_r B_l(E,r)
=
k^{2l+1}
\partial_r \widehat B_l(E,r),
\end{equation}
substitution into Eq.~\eqref{original_B_equation}
gives
\begin{align}
k^{2l+1}
\partial_r \widehat B_l(E,r)
=
-
k^{2l+1}
\widetilde j_l(E,r)
V(r)
\left[
\widetilde j_l(E,r)\widehat A_l(E,r)
-
\widetilde y_l(E,r)\widehat B_l(E,r)
\right].
\end{align}
Dividing by \(k^{2l+1}\), one obtains
\begin{equation}
\boxed{
\partial_r \widehat B_l(E,r)
=
-
\widetilde j_l(E,r)
V(r)
\left[
\widetilde j_l(E,r)\widehat A_l(E,r)
-
\widetilde y_l(E,r)\widehat B_l(E,r)
\right].
}
\label{transformed_B_equation}
\end{equation}

\subsection{Analytic structure of the transformed system}

The transformed differential system
\eqref{transformed_A_equation}--\eqref{transformed_B_equation}
contains only the reduced Ricatti--Bessel and Ricatti--Neumann
functions
\begin{equation}
\widetilde j_l(E,r),
\qquad
\widetilde y_l(E,r),
\end{equation}
which depend on the energy variable only through integer powers of
\(E\). Consequently, all coefficients of the transformed system are
single-valued analytic functions of the complex energy variable. An important feature of the present transformation is that it preserves
the regular normalization at the origin. If
\begin{equation}
A_l(E,0)=1,
\qquad
B_l(E,0)=0,
\end{equation}
then
\begin{equation}
\widehat A_l(E,0)=1,
\qquad
\widehat B_l(E,0)=0.
\end{equation}
The transformed system therefore possesses initial conditions that are
independent of the energy parameter and free from any threshold branch
structure. The factorization procedure has thus isolated the entire multivalued dependence of the scattering problem into explicit powers of the
momentum variable \(k\), while reformulating the differential system in
terms of functions whose coefficients are analytic in the complex
energy plane. This property will allow the direct application of the
Poincar\'e--Picard theorem in the next section.

\section{Analyticity via the Poincaré--Picard Theorem}

In the previous section, the original first-order system for the Jost functions was transformed into a differential system whose coefficients are single-valued analytic functions of the complex energy variable \(E\). The purpose of the present section is to establish rigorously the analyticity of the transformed Jost functions by means of the Poincaré--Picard theorem on analytic dependence of solutions of differential equations on parameters.
The strategy of the proof is the following:

\begin{enumerate}[(i)]
\item rewrite the transformed equations as a matrix differential system;

\item prove that the coefficient matrix is analytic in the energy parameter;

\item apply the Poincaré--Picard theorem;

\item conclude that the transformed Jost functions are analytic functions of the complex energy.
\end{enumerate}

\subsection{Matrix form of the transformed system}

The transformed first-order equations obtained in Section~4 are
\begin{equation}
\partial_r \widehat{A}_l(E,r)
=
-
\widetilde{y}_l(E,r)
V(r)
\left[
\widetilde{j}_l(E,r)\widehat{A}_l(E,r)
-
\widetilde{y}_l(E,r)\widehat{B}_l(E,r)
\right],
\label{transformed_A_recall}
\end{equation}
and
\begin{equation}
\partial_r \widehat{B}_l(E,r)
=
-
\widetilde{j}_l(E,r)
V(r)
\left[
\widetilde{j}_l(E,r)\widehat{A}_l(E,r)
-
\widetilde{y}_l(E,r)\widehat{B}_l(E,r)
\right].
\label{transformed_B_recall}
\end{equation}
Expanding the right-hand sides gives
\begin{align}
\partial_r \widehat{A}_l(E,r)
&=
-
\widetilde{j}_l(E,r)\widetilde{y}_l(E,r)
V(r)
\widehat{A}_l(E,r)
\nonumber\\
&\quad
+
\widetilde{y}_l^{\,2}(E,r)
V(r)
\widehat{B}_l(E,r),
\label{expanded_A_equation}
\end{align}
and
\begin{align}
\partial_r \widehat{B}_l(E,r)
&=
-
\widetilde{j}_l^{\,2}(E,r)
V(r)
\widehat{A}_l(E,r)
\nonumber\\
&\quad
+
\widetilde{j}_l(E,r)\widetilde{y}_l(E,r)
V(r)
\widehat{B}_l(E,r).
\label{expanded_B_equation}
\end{align}
Introducing the vector
\begin{equation}
X(E,r)
=
\begin{pmatrix}
\widehat{A}_l(E,r)
\\
\widehat{B}_l(E,r)
\end{pmatrix},
\label{vector_definition}
\end{equation}
the transformed system may be written in matrix form as
\begin{equation}
\partial_r X(E,r)
=
M(E,r)\,
X(E,r),
\label{matrix_system}
\end{equation}
where
\begin{equation}
M(E,r)
=
V(r)
\begin{pmatrix}
-\widetilde{j}_l(E,r)\widetilde{y}_l(E,r)
&
\widetilde{y}_l^{\,2}(E,r)
\\[2mm]
-\widetilde{j}_l^{\,2}(E,r)
&
\widetilde{j}_l(E,r)\widetilde{y}_l(E,r)
\end{pmatrix}.
\label{matrix_definition}
\end{equation}
The regular solution satisfies
\begin{equation}
A_l(E,0)=1,
\qquad
B_l(E,0)=0.
\end{equation}
Since the transformed variables were defined by
\begin{equation}
\widehat A_l=A_l,
\qquad
\widehat B_l=k^{-(2l+1)}B_l,
\end{equation}
it follows immediately that
\begin{equation}
\boxed{
X(E,0)
=
\begin{pmatrix}
1
\\
0
\end{pmatrix}.
}
\label{matrix_initial_conditions}
\end{equation}
The initial vector is independent of the energy parameter and contains no branch-point dependence.
\subsection{Analyticity of the coefficient matrix}

The applicability of the Poincaré--Picard theorem depends on the analytic properties of the coefficient matrix \(M(E,r)\). We therefore begin by establishing the analyticity of the reduced Ricatti functions.

\begin{lemma}
For every fixed value of \(r\), the functions
\begin{equation}
\widetilde{j}_l(E,r),
\qquad
\widetilde{y}_l(E,r),
\end{equation}
are analytic functions of the complex energy variable \(E\).
\end{lemma}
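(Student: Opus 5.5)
The plan is to show that, for fixed $r$, each of the series \eqref{reduced_bessel} and \eqref{reduced_neumann} is a power series in $E$ with infinite radius of convergence. Each reduced function is then an entire function of $E$, hence in particular single-valued and analytic on the whole complex energy plane. Concretely, I will write
\begin{equation*}
\widetilde j_l(E,r)=\sum_{n=0}^{\infty} a_n(r)\,E^n,
\qquad
a_n(r)=\frac{\sqrt{\pi}\,(-1)^n}{n!\,\Gamma\left(l+n+\frac32\right)}\left(\frac r2\right)^{2n+l+1}\left(\frac{2\mu}{\hbar^2}\right)^n .
\end{equation*}
The ratio test gives $|a_{n+1}/a_n| = (r^2/4)(2\mu/\hbar^2)/\bigl((n+1)(l+n+\tfrac32)\bigr)\to 0$. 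So the radius of convergence is infinite for every finite $r\ge 0$. On any compact set $|E|\le R$ the series is then dominated termwise by $\sum |a_n| R^n<\infty$. By the Weierstrass $M$-test it converges uniformly there, and since its partial sums are polynomials in $E$, the Weierstrass theorem on uniform limits of holomorphic functions shows that $\widetilde j_l(\cdot,r)$ is entire.

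For $\widetilde y_l$ the same argument applies once the schematic coefficients $c_n$ in \eqref{reduced_neumann} are made explicit. Here I will use that for integer $l$ the Riccati--Neumann function is a half-integer-order Bessel function, $y_l(z)\propto z^{1/2}J_{-l-1/2}(z)$ up to sign and constant. Its series is therefore
\begin{equation*}
c_n=\gamma_l\,\frac{(-1)^n\, 2^{l-2n}}{n!\,\Gamma\left(n-l+\frac12\right)},
\end{equation*}
where $\gamma_l$ is a fixed constant depending only on $l$. The key point is that $\Gamma(n-l+\tfrac12)$ is finite and nonzero for all $n$, because its argument is never a nonpositive integer. Consequently, unlike the integer-order Neumann functions, no logarithmic terms appear, and the expansion really is of the form $\sum c_n (kr)^{2n-l}$ assumed in Section~3. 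The ratio $|c_{n+1}/c_n| = 1/\bigl(4(n+1)|n-l+\tfrac12|\bigr)$ tends to zero. Hence, for fixed $r>0$, the series $\sum_n c_n r^{2n-l}(2\mu/\hbar^2)^n E^n$ again has infinite radius of convergence, and the same $M$-test and Weierstrass argument gives that $\widetilde y_l(\cdot,r)$ is entire.

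I expect the main obstacle to be the justification for $\widetilde y_l$, which has two parts. The first is pinning down the coefficients $c_n$ and ruling out logarithmic contributions, which is done via the half-integer order identification above. The second is the point $r=0$, where $\widetilde y_l$ carries the factor $r^{-l}$ and is singular for $l\ge 1$. I will therefore state the lemma for every fixed $r>0$, where the prefactor $r^{-l}$ is just a nonzero constant that does not affect analyticity in $E$. I will also remark that, for $l=0$, $\widetilde y_0$ is analytic at $r=0$ as well. Finally, I will note that the factorizations \eqref{factorized_j} and \eqref{factorized_y} are consistent with these series, so the entire functions obtained are exactly the reduced functions appearing in $M(E,r)$.
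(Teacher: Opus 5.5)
Your proposal is correct and takes the same route as the paper, whose proof simply observes that $\widetilde j_l$ and $\widetilde y_l$ are power series in $E$ and are therefore analytic inside their domain of convergence. Your additions fill in details the paper leaves implicit without changing the argument: the ratio test giving an infinite radius of convergence (so both functions are entire), the explicit half-integer-order coefficients $c_n$ that rule out logarithmic terms, and the restriction to $r>0$ for $\widetilde y_l$ when $l\ge 1$.
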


\begin{proof}

From Section~4, the reduced Ricatti--Bessel function is given by
\begin{equation}
\widetilde{j}_l(E,r)
=
\sqrt{\pi}
\sum_{n=0}^{\infty}
\frac{
(-1)^n
}{
n!\,
\Gamma\left(l+n+\frac{3}{2}\right)
}
\left(
\frac{r}{2}
\right)^{2n+l+1}
\left(
\frac{2\mu E}{\hbar^2}
\right)^n.
\label{reduced_j_recall}
\end{equation}
Equation~\eqref{reduced_j_recall} is a power series in the variable \(E\). Since power series define analytic functions inside their domains of convergence, it follows that
\begin{equation}
\widetilde{j}_l(E,r)
\end{equation}
is analytic in \(E\). Similarly, the reduced Ricatti--Neumann function is
\begin{equation}
\widetilde{y}_l(E,r)
=
\sum_{n=0}^{\infty}
c_n
r^{2n-l}
\left(
\frac{2\mu E}{\hbar^2}
\right)^n,
\label{reduced_y_recall}
\end{equation}
which is again a power series in the energy variable. Therefore,
\begin{equation}
\widetilde{y}_l(E,r)
\end{equation}
is analytic in \(E\).

\end{proof}
\noindent
We may now establish the analyticity of the coefficient matrix.

\begin{proposition}
For every fixed value of \(r\), the matrix
\begin{equation}
M(E,r)
\end{equation}
is analytic in the complex energy variable \(E\).
\end{proposition}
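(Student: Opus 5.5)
The plan is to reduce the statement to the preceding Lemma together with the closure of analytic functions under finite sums and products. Fix \(r>0\) and consider the four entries of \(M(E,r)\) in Eq.~\eqref{matrix_definition}. Each entry has the form \(V(r)\) times one of \(-\widetilde j_l\widetilde y_l\), \(\widetilde y_l^{\,2}\), \(-\widetilde j_l^{\,2}\), \(\widetilde j_l\widetilde y_l\). The prefactor \(V(r)\) does not depend on \(E\), so for fixed \(r\) it is a complex constant, and multiplying an analytic function by a constant keeps it analytic.

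By the Lemma, \(\widetilde j_l(E,r)\) and \(\widetilde y_l(E,r)\) are analytic in \(E\) for fixed \(r\). Products of analytic functions on a common domain are analytic, by the Leibniz rule for complex derivatives or by the Cauchy product of power series. Hence each of \(\widetilde j_l^{\,2}\), \(\widetilde y_l^{\,2}\) and \(\widetilde j_l\widetilde y_l\) is analytic. A matrix-valued function is analytic exactly when all of its entries are analytic, so \(M(E,r)\) is analytic in \(E\).

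The main point needing care is the domain of analyticity. The product argument only works if the Lemma's power series converge for all \(E\in\mathbb{C}\), so that \(M\) is entire in \(E\) rather than analytic merely on some disc. I would make this explicit by a ratio-test estimate on the coefficients.

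\begin{itemize}
\item For \(\widetilde j_l\), the coefficients behave like \((r/2)^{2n}/(n!\,\Gamma(l+n+\tfrac32))\). These decay superexponentially, so the radius of convergence is infinite.
\item For \(\widetilde y_l\), I would write the coefficients \(c_n\) explicitly, using the finite-sum/series representation of the spherical Neumann function. This should show that \(c_n\) also carries a factor like \(1/(n!\,\Gamma(n-l+\tfrac12))\), so the series in \(E\) is again entire.
\end{itemize}

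In both cases the \(E\)-dependence enters only through \((k^2)^n\propto E^n\), so no branch point survives. I would also point out that this requires \(r>0\): the negative powers \(r^{2n-l}\) in \(\widetilde y_l\) make \(r=0\) singular. The statement therefore holds for every fixed \(r\in(0,\infty)\), with \(M(\cdot,r)\) an entire matrix-valued function of \(E\).
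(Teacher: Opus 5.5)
Your proof is correct and follows the paper's argument. Each entry of \(M(E,r)\) is the \(E\)-independent factor \(V(r)\) times a product of \(\widetilde j_l\) and \(\widetilde y_l\); these are analytic by the Lemma, and analyticity is preserved under products and constant multiples. Your extra steps sharpen points the paper's proof leaves implicit: you check the infinite radius of convergence, so \(M(\cdot,r)\) is entire, and you restrict to \(r>0\) because \(\widetilde y_l\) contains \(r^{-l}\) terms for \(l\ge 1\).
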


\begin{proof}

The entries of the matrix \(M(E,r)\) are finite products of the functions
\begin{equation}
\widetilde{j}_l(E,r),
\qquad
\widetilde{y}_l(E,r),
\end{equation}
together with the potential \(V(r)\). Since \(V(r)\) is independent of the energy parameter and the reduced Ricatti functions are analytic in \(E\), the entries
\begin{equation}
\widetilde{j}_l(E,r)\widetilde{y}_l(E,r),
\qquad
\widetilde{j}_l^2(E,r),
\qquad
\widetilde{y}_l^2(E,r),
\end{equation}
are also analytic in \(E\), because sums and products of analytic functions remain analytic. Consequently, all entries of the matrix \(M(E,r)\) are analytic functions of the complex energy variable. Therefore the matrix itself is analytic in \(E\).

\end{proof}

\subsection{The Poincaré--Picard theorem}
The following theorem is a standard result in the theory of ordinary differential equations with analytic parameter dependence \cite{coddington1955,hartman1964,wasow1965}.

\begin{theorem}[Poincaré--Picard]
Consider the differential system
\begin{equation}
\frac{dY}{dr}
=
F(r,Y,\lambda),
\end{equation}
where \(F\) is analytic in the complex parameter \(\lambda\) and continuous in the remaining variables. If the initial data are analytic functions of \(\lambda\), then the corresponding solution depends analytically on \(\lambda\).
\end{theorem}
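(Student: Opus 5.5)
The plan is to prove the theorem by Picard iteration, tracking the parameter throughout, and then to conclude with the Weierstrass theorem on uniform limits of holomorphic functions. I first make the hypotheses precise in the form needed later. Let $\Lambda\subset\mathbb{C}$ be open and $D$ a domain in $Y$-space. Assume $F(r,Y,\lambda)$ is jointly continuous on $[r_0,r_1]\times D\times\Lambda$ and holomorphic in $\lambda$ for fixed $(r,Y)$. Assume also that it is Lipschitz in $Y$ with a constant $L$ that is uniform for $\lambda$ in compact subsets of $\Lambda$. In the linear case $F=M(E,r)Y$ of Eq.~\eqref{matrix_system}, this last condition reduces to local boundedness of $\|M(E,r)\|$ in $(E,r)$. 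Finally, the initial datum $Y(r_0,\lambda)=Y_0(\lambda)$ is holomorphic on $\Lambda$. Since analyticity is a local property, it suffices to prove holomorphy of $\lambda\mapsto Y(r,\lambda)$ on a small closed disc $K\subset\Lambda$ around an arbitrary point.

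The first step is to rewrite the problem as the integral equation
\begin{equation*}
Y(r,\lambda)=Y_0(\lambda)+\int_{r_0}^{r}F(s,Y(s,\lambda),\lambda)\,ds ,
\end{equation*}
and to define the iterates $Y^{(0)}=Y_0(\lambda)$ and
\begin{equation*}
Y^{(n+1)}(r,\lambda)=Y_0(\lambda)+\int_{r_0}^{r}F(s,Y^{(n)}(s,\lambda),\lambda)\,ds .
\end{equation*}
By induction, each $Y^{(n)}$ is continuous in $(r,\lambda)$ and holomorphic in $\lambda$. The composition $\lambda\mapsto F(s,Y^{(n)}(s,\lambda),\lambda)$ is holomorphic because $F$ is holomorphic jointly in $(Y,\lambda)$; in the linear case this is just a product of holomorphic functions. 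Integration in $s$ of a jointly continuous integrand that is holomorphic in $\lambda$ preserves holomorphy, which I would check via Morera's theorem together with Fubini.

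The second step is the uniform estimate
\begin{equation*}
\sup_{\lambda\in K}\bigl|Y^{(n+1)}-Y^{(n)}\bigr|\le C\,\frac{L^n|r-r_0|^n}{n!}.
\end{equation*}
It follows in the standard way from the uniform Lipschitz bound and from a bound $C$ on $|F(s,Y_0(\lambda),\lambda)|$ over the compact set. This gives uniform convergence on $[r_0,r_1]\times K$. By the Weierstrass theorem, the limit is holomorphic in $\lambda$ for each $r$. It solves the integral equation, and by uniqueness it coincides with the solution $Y(r,\lambda)$.

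The main obstacle is step two in the nonlinear or noncompact setting. There one must ensure that the iterates stay in $D$ and that the interval of existence does not shrink as $\lambda$ varies. For the linear system of interest this difficulty disappears: the Grönwall-type bound holds on any finite interval $[0,r]$, so the iteration converges globally, provided $V$ is integrable near $0$ together with the factors $\widetilde{j}_l\widetilde{y}_l$, $\widetilde{y}_l^{\,2}$ and $\widetilde{j}_l^{\,2}$. The small-$r$ behaviour of $\widetilde y_l\sim r^{-l}$ is the point to watch. I would handle it by requiring $\int_0^r s\,|V(s)|\,ds<\infty$ and $\|M(E,s)\|\le g(s)$ with $g\in L^1(0,r)$ uniformly for $E\in K$. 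This is the integrable-majorant version of the Lipschitz hypothesis, and with it the same iteration goes through.
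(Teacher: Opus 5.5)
The paper gives no proof of this theorem; it only cites Coddington--Levinson, Hartman and Wasow. Your argument (Picard iterates shown holomorphic via Morera plus Fubini, a factorial estimate uniform on a compact disc, Weierstrass' theorem, identification of the limit by uniqueness) is the standard textbook proof. There is, however, a genuine gap in Step~1. You justify holomorphy of $\lambda\mapsto F(s,Y^{(n)}(s,\lambda),\lambda)$ by holomorphy of $F$ jointly in $(Y,\lambda)$. That assumption is neither in your list of precise hypotheses (holomorphy in $\lambda$ for fixed $(r,Y)$, joint continuity, uniform Lipschitz bound) nor in the theorem as stated, and it cannot be dropped. Take $Y\in\mathbb{C}$, $F(r,Y,\lambda)=\overline{Y}$ and $Y(0)=\lambda$. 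Then $F$ is constant in $\lambda$, continuous, and Lipschitz with $L=1$, yet the solution is $Y(r,\lambda)=\lambda\cosh r+\overline{\lambda}\sinh r$, which is not holomorphic in $\lambda$ for $r\neq 0$. So the statement as written is false, and your proof is correct only once you add the hypothesis that $(Y,\lambda)\mapsto F(r,Y,\lambda)$ is holomorphic for each $r$. By Osgood's lemma, separate holomorphy in $Y$ and in $\lambda$ together with continuity suffices. For the linear system $F=M(E,r)Y$ that the paper actually uses, this holds automatically, so the application is unaffected.

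Two further points. First, in the nonlinear case you name the localization problem but do not solve it. The routine fix is as follows: choose a closed ball of radius $\rho$ about $Y_0(\lambda_0)$ inside $D$; shrink $K$ so that $|Y_0(\lambda)-Y_0(\lambda_0)|\le\rho/2$ on $K$; bound $|F|\le C$ on the resulting compact set; and run the iteration on an interval of length at most $\rho/(2C)$. Then cover a given compact existence interval of $Y(\cdot,\lambda_0)$ by finitely many such steps, using continuous dependence to keep nearby solutions in a tube around it.

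Second, for the paper's system your remedy at $r=0$ is much stronger than necessary. The $(1,2)$ entry of $M$ behaves like $V(r)r^{-2l}$, so requiring $\|M(E,\cdot)\|\in L^1(0,r)$ forces $\int_0|V(s)|s^{-2l}\,ds<\infty$. For $l\ge 1$ this excludes every potential with $V(0)\neq 0$, and for every $l$ it excludes a $1/r$ singularity. Instead, iterate the weighted quantity $\varphi=|\widehat A_l|+s^{-(2l+1)}|\widehat B_l|$ and use the bounds $|\widetilde j_l|\le c\,s^{l+1}$ and $|\widetilde y_l|\le c\,s^{-l}$, valid for $0<s\le r$ and $E\in K$. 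For successive differences this gives $\varphi_{n+1}(r)\le 2c^2\int_0^r s|V(s)|\varphi_n(s)\,ds$, so $\int_0^r s|V(s)|\,ds<\infty$ alone suffices. This extension is genuinely needed: for $l\ge 1$ the entry $V\widetilde y_l^{\,2}$ is generally unbounded at $r=0$, so the continuity-in-$r$ hypothesis of the theorem as formulated in the paper does not even cover the paper's own application.
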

\noindent
The theorem expresses the analytic dependence of the solutions of differential equations on external complex parameters. In the present work, the role of the parameter \(\lambda\) is played by the complex energy variable \(E\).

\subsection{Application to the transformed Jost system}

We now verify explicitly that the transformed system satisfies the hypotheses of the Poincar\'e--Picard theorem.

\begin{enumerate}[(i)]

\item The coefficient matrix \(M(E,r)\) is analytic in the energy variable \(E\).

\item The initial condition
\[
X(E,0)
=
\begin{pmatrix}
1\\
0
\end{pmatrix}
\]
is independent of the energy parameter and therefore analytic in \(E\).

\item The short-range potential \(V(r)\) is assumed sufficiently regular to guarantee existence and uniqueness of solutions.

\end{enumerate}
Hence all assumptions of the Poincar\'e--Picard theorem are satisfied.
The theorem therefore guarantees that the solution vector
\begin{equation}
X(E,r)
=
\begin{pmatrix}
\widehat{A}_l(E,r)
\\
\widehat{B}_l(E,r)
\end{pmatrix}
\end{equation}
depends analytically on the complex energy variable \(E\) for every finite value of the radial coordinate \(r\). We may now state the main theorem of the present work.

\begin{theorem}
Let \(V(r)\) be a short-range potential for which the transformed first-order system admits a unique solution. Then, for every finite value of the radial coordinate \(r\), the transformed coefficient functions
\begin{equation}
\widehat A_l(E,r),
\qquad
\widehat B_l(E,r),
\end{equation}
are single-valued analytic functions of the complex energy variable \(E\).
\label{main_theorem}
\end{theorem}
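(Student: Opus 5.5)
The plan is to prove analyticity directly by Picard iteration for the Volterra integral form of the system \eqref{matrix_system}, rather than by quoting the Poincaré--Picard theorem as a black box. The reason is that $M(E,r)$ is not continuous at $r=0$: by \eqref{reduced_y_recall}, $\widetilde y_l(E,r)\sim c_0 r^{-l}$, so the entry $V\widetilde y_l^{\,2}$ behaves like $r^{-2l}V(r)$. The hypotheses of the Poincaré--Picard theorem therefore hold only on intervals $[r_0,R]$ with $r_0>0$. I will make explicit the assumption that $V$ is locally integrable with $\int_0^R r|V(r)|\,dr<\infty$.

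\textbf{Step 1 (entire coefficients).} First I strengthen the Lemma. The series \eqref{reduced_j_recall} for $\widetilde j_l$ has infinite radius of convergence in $E$, because of the $n!\,\Gamma(l+n+3/2)$ in the denominator. The series for $\widetilde y_l$ has the same property, since $r^l\widetilde y_l$ is a polynomial in $E$ times $\cos$-type entire series in $k^2r^2$. So both are entire in $E$ for each fixed $r>0$. Uniformly for $|E|\le \rho$ and $0<r\le R$, I will derive the bounds
\[
|\widetilde j_l|\le C\,r^{l+1},\qquad |\widetilde y_l|\le C\,r^{-l},
\]
with $C=C(\rho,R)$. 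Since every coefficient is then entire, single-valuedness will be automatic: an entire limit has no monodromy around $E=0$.

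\textbf{Step 2 (integral equation and iteration).} I integrate \eqref{matrix_system} from $0$ using $X(E,0)=(1,0)^T$, which gives $X=X_0+\int_0^r M(E,r')X(E,r')\,dr'$. I then define iterates $X^{(0)}=X_0$ and $X^{(n+1)}=X_0+\int_0^r M X^{(n)}\,dr'$. Each iterate is entire in $E$, because differentiation under the integral is justified by the uniform bounds of Step~1 together with Morera's theorem. It is easiest to estimate the iteration through the combination
\[
w=\widetilde j_l\widehat A_l-\widetilde y_l\widehat B_l=k^{-(l+1)}u_l,
\]
which by \eqref{factorized_combination} satisfies a scalar Volterra equation. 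Its kernel is $V(r')\,[\widetilde j_l(r)\widetilde y_l(r')-\widetilde y_l(r)\widetilde j_l(r')]$, bounded by $C\,r'|V(r')|\,(r/r')^{l+1}$ for $r'\le r$. The standard estimate then gives
\[
|w^{(n+1)}-w^{(n)}|\le C\,r^{l+1}\frac{\bigl(\int_0^r Cr'|V|\,dr'\bigr)^{n}}{n!}.
\]
This yields uniform convergence on $\{|E|\le\rho\}\times(0,R]$. The coefficients $\widehat A_l$ and $\widehat B_l$ are then recovered by quadrature from \eqref{transformed_A_recall} and \eqref{transformed_B_recall}; the integrands are $\widetilde y_l V w$ and $\widetilde j_l V w$, which are bounded by $C\,r'|V|$ times a bounded factor and hence integrable at $0$.

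\textbf{Step 3 (conclusion).} By Weierstrass' theorem, uniform limits of entire functions on compacts are analytic, so $\widehat A_l(E,r)$ and $\widehat B_l(E,r)$ are entire in $E$ for each finite $r$. Uniqueness follows from the same Gronwall-type bound, which identifies these limits with the solution referred to in the Theorem. Alternatively, once $r_0>0$ is fixed, the Poincaré--Picard theorem can be applied on $[r_0,r]$ with the analytic initial data at $r_0$ produced above.

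The main obstacle I expect is the $r\to0$ singularity, and in particular the bound on the kernel of $w$. The entry $\widetilde y_l^{\,2}V\sim r^{-2l}V$ is not integrable for general short-range $V$ when $l\ge1$. The cancellation must therefore come from $\widehat B_l=O(r^{2l+1})$ near the origin, and the estimate has to be organized around $w$ rather than componentwise on $X$. If this causes trouble, my fallback is to verify directly that $\widehat B_l^{(n)}(E,r)\le C_n r^{2l+1}$ at every step of the iteration.
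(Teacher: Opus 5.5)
Your argument is correct, but it takes a different route from the paper.

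\textbf{The paper's route.} The paper's proof is a one-step black-box application. Proposition~1 gives analyticity of $M(E,r)$ in $E$, and the initial vector $X(E,0)=(1,0)^T$ is energy independent. The Poincar\'e--Picard theorem is then invoked directly from $r=0$, and single-valuedness is read off from the absence of explicit powers of $k$ in the transformed system. This buys brevity and the conceptual framing the paper advertises.

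\textbf{Your route.} You carry out the Picard iteration yourself on the Volterra form. You control it through the scalar $w=\widetilde j_l\widehat A_l-\widetilde y_l\widehat B_l=k^{-(l+1)}u_l$, which is the standard regular solution normalized as $r^{l+1}/(2l+1)!!$, using the weight $r^{l+1}$. You then recover $\widehat A_l,\widehat B_l$ by quadrature.

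\textbf{What your route buys.} It handles exactly what the paper's proof passes over. The theorem the paper quotes requires continuity in $r$, but $V\widetilde y_l^{\,2}\sim r^{-2l}V$ need not even be integrable at the initial point when $l\ge1$. So the application from $r=0$ is not justified as written. Your weighted estimates handle that endpoint under the explicit condition $\int_0^R r|V|\,dr<\infty$, which is a precise form of the paper's ``sufficiently regular''. Uniform convergence on $|E|\le\rho$ then gives entireness, so single-valuedness becomes a theorem rather than a heuristic.

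\textbf{The trade-off.} Your proof is in substance the classical Volterra-equation argument that the paper's introduction sets out to replace. The Poincar\'e--Picard theorem becomes usable only on $[r_0,r]$ once the singular endpoint has been treated, as in your closing remark.

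\textbf{A bookkeeping point.} Since $MX=V\,(-\widetilde y_l w,\,-\widetilde j_l w)^T$, the $X$-iteration induces exactly the $w$-iteration. Specifically, $\widehat A_l^{(n+1)}=1-\int_0^r\widetilde y_lVw^{(n)}\,dr'$ and $\widehat B_l^{(n+1)}=-\int_0^r\widetilde j_lVw^{(n)}\,dr'$, with $w^{(n+1)}=\widetilde j_l\widehat A_l^{(n+1)}-\widetilde y_l\widehat B_l^{(n+1)}$. Two things follow:
\begin{enumerate}[(i)]
\item Passing to the limit shows that the quadrature-defined pair really satisfies the transformed system.
\item The $O(r^{2l+1})$ bound in your fallback comes for free.
\end{enumerate}
This also fixes the sign of your kernel: $w(r)=\widetilde j_l(r)-\int_0^r V(r')\,[\widetilde j_l(r)\widetilde y_l(r')-\widetilde y_l(r)\widetilde j_l(r')]\,w(r')\,dr'$. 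The sign is harmless for your bounds.
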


\begin{proof}

From Proposition~1, the coefficient matrix
\begin{equation}
M(E,r)
\end{equation}
is analytic in the complex energy variable. Furthermore, the transformed system satisfies the energy-independent initial condition
\begin{equation}
X(E,0)
=
\begin{pmatrix}
1\\
0
\end{pmatrix},
\end{equation}
which is analytic in \(E\). The transformed differential system therefore satisfies all hypotheses of the Poincaré--Picard theorem. Consequently, the solution vector
\begin{equation}
X(E,r)
=
\begin{pmatrix}
\widehat A_l(E,r)
\\
\widehat B_l(E,r)
\end{pmatrix}
\end{equation}
depends analytically on the complex energy parameter \(E\). Hence both
\begin{equation}
\widehat A_l(E,r)
\qquad
\text{and}
\qquad
\widehat B_l(E,r)
\end{equation}
are analytic functions of \(E\). Since the explicit square-root dependence associated with the momentum variable has been removed from the transformed differential system, these functions are single-valued on the complex energy plane.

\end{proof}

\subsection{Interpretation of the result}

The significance of Theorem~\ref{main_theorem} lies in the fact that the original branching structure associated with the momentum variable has been completely isolated through factorization. Indeed, the original Jost functions inherit the multivaluedness of the square-root mapping
\begin{equation}
k
=
\sqrt{\frac{2\mu E}{\hbar^2}},
\end{equation}
whereas the transformed quantities
\begin{equation}
\widetilde{A}_l(E,r),
\qquad
\widetilde{B}_l(E,r),
\end{equation}
depend only on integer powers of the energy variable. The transformed differential system therefore possesses coefficients analytic in \(E\), which allows the direct application of the Poincaré--Picard theorem. The analyticity established in the present section forms the mathematical foundation for the study of resonances, poles of the scattering matrix, and analytic continuation onto different sheets of the energy Riemann surface.

\section{Discussion of the Riemann Surface}

In Section~5 it was shown that the transformed coefficient functions
\begin{equation}
\widehat A_l(E,r),
\qquad
\widehat B_l(E,r),
\end{equation}
are single-valued analytic functions of the complex energy variable \(E\).
The purpose of the present section is to interpret this result
geometrically in terms of the Riemann surface generated by the
square-root relation between the momentum and the energy. The analysis clarifies the origin of the multivaluedness of the physical Jost functions and shows that the entire branching structure of the scattering problem is carried by explicit powers of the momentum variable,
\begin{equation}
k
=
\sqrt{\frac{2\mu E}{\hbar^2}},
\end{equation}
while the transformed coefficient functions remain analytic on the
complex energy plane.

\subsection{The square-root mapping}

The analytic structure of the scattering problem is governed fundamentally by the relation
\begin{equation}
k
=
\sqrt{\frac{2\mu E}{\hbar^2}}.
\label{riemann_k_definition}
\end{equation}
The square-root function is intrinsically multivalued in the complex plane. To see this explicitly, let the complex energy be represented in polar form,
\begin{equation}
E
=
|E|e^{i\theta}.
\label{riemann_energy_polar}
\end{equation}
Substituting Eq.~\eqref{riemann_energy_polar} into Eq.~\eqref{riemann_k_definition} yields
\begin{align}
k
&=
\sqrt{\frac{2\mu}{\hbar^2}}
\sqrt{|E|e^{i\theta}}
\nonumber\\
&=
\sqrt{\frac{2\mu}{\hbar^2}}
|E|^{1/2}
e^{i\theta/2}.
\label{riemann_k_polar}
\end{align}
The momentum therefore depends on the half-angle \(\theta/2\). Consequently, when the energy performs one complete rotation around the origin,
\begin{equation}
\theta
\longrightarrow
\theta+2\pi,
\end{equation}
one obtains
\begin{align}
k
&\longrightarrow
\sqrt{\frac{2\mu}{\hbar^2}}
|E|^{1/2}
e^{i(\theta+2\pi)/2}
\nonumber\\
&=
\sqrt{\frac{2\mu}{\hbar^2}}
|E|^{1/2}
e^{i\theta/2}
e^{i\pi}
\nonumber\\
&=
-k.
\label{sign_change_recall}
\end{align}
The momentum therefore changes sign after one complete circuit around the origin in the energy plane. Only after two complete rotations,
\begin{equation}
\theta
\longrightarrow
\theta+4\pi,
\end{equation}
does one recover the original value of the momentum. This demonstrates that the point
\begin{equation}
E=0
\end{equation}
is a branch point of the square-root mapping.

\subsection{Construction of the energy Riemann surface}

The multivaluedness of the momentum can be resolved geometrically by introducing a Riemann surface. We first define two branches of the square-root function.

\subsubsection*{First sheet}

On the first sheet, the argument of the energy variable is restricted to
\begin{equation}
0 \leq \theta < 2\pi.
\label{first_sheet_theta}
\end{equation}
The momentum is then defined by
\begin{equation}
k_1(E)
=
\sqrt{\frac{2\mu}{\hbar^2}}
|E|^{1/2}
e^{i\theta/2}.
\label{first_sheet_k}
\end{equation}
This branch corresponds to the physical sheet of the scattering problem.

\subsubsection*{Second sheet}

On the second sheet, the argument satisfies
\begin{equation}
2\pi \leq \theta < 4\pi.
\label{second_sheet_theta}
\end{equation}
The corresponding momentum is
\begin{equation}
k_2(E)
=
\sqrt{\frac{2\mu}{\hbar^2}}
|E|^{1/2}
e^{i\theta/2}.
\label{second_sheet_k}
\end{equation}
Using
\begin{equation}
e^{i(\theta+2\pi)/2}
=
-
e^{i\theta/2},
\end{equation}
one obtains
\begin{equation}
k_2(E)
=
-k_1(E).
\label{sheet_relation}
\end{equation}
The two branches are therefore connected continuously at the branch point \(E=0\). The complete square-root function is thus single-valued only on a two-sheeted Riemann surface.

\subsection{Branch cuts and analytic continuation}

In practical applications, one introduces a branch cut in the complex energy plane in order to define a single branch of the square-root function. In scattering theory, the branch cut is usually chosen along the positive real energy axis,
\begin{equation}
E \in [0,\infty).
\label{branch_cut}
\end{equation}
The two sheets are then connected continuously across the cut. Analytic continuation across the branch cut moves the solution from one sheet to the other. If the energy approaches the cut from above,
\begin{equation}
E
=
|E|e^{i0},
\end{equation}
one has
\begin{equation}
k
=
+\sqrt{\frac{2\mu E}{\hbar^2}}.
\end{equation}
Approaching the same point from below gives
\begin{equation}
E
=
|E|e^{i2\pi},
\end{equation}
and therefore
\begin{equation}
k
=
-\sqrt{\frac{2\mu E}{\hbar^2}}.
\end{equation}
Thus the sign of the momentum distinguishes the two sheets of the Riemann surface.

\subsection{Riemann surface and the Jost functions}

The original Jost functions inherit the multivaluedness of the momentum because they depend explicitly on the Ricatti--Hankel functions
\begin{equation}
h_l^{(\pm)}(kr)
=
j_l(kr)
\pm
i y_l(kr).
\label{riemann_hankel}
\end{equation}
Since the arguments of the Ricatti functions contain the momentum \(k\), analytic continuation around the branch point changes the sign of the momentum and consequently modifies the Jost functions. To see this explicitly, consider the reconstructed Jost functions
\eqref{factorized_jost_in} and \eqref{factorized_jost_out}.
Under analytic continuation around the threshold branch point,
\begin{equation}
k
\longrightarrow
-k.
\end{equation}
Since
\begin{equation}
k^{2l+1}
\longrightarrow
(-1)^{2l+1}
k^{2l+1}
=
-k^{2l+1},
\end{equation}
the Jost functions change sheets according to
\begin{align}
f_l^{(\mathrm{in})}
&\longrightarrow
\frac12
\left[
\widehat A_l
-
i\,k^{2l+1}\widehat B_l
\right],
\\
f_l^{(\mathrm{out})}
&\longrightarrow
\frac12
\left[
\widehat A_l
+
i\,k^{2l+1}\widehat B_l
\right].
\end{align}
The analytic functions
\begin{equation}
\widehat A_l(E,\infty),
\qquad
\widehat B_l(E,\infty),
\end{equation}
remain unchanged during the continuation. The sheet structure is
therefore generated entirely by the odd power \(k^{2l+1}\).

\subsection{Reconstruction of the Jost functions}

The results of Section~5 establish the analyticity of the transformed
coefficient functions
\begin{equation}
\widehat A_l(E,r),
\qquad
\widehat B_l(E,r).
\end{equation}
To recover the physical scattering quantities, one must reconstruct the
original coefficient functions and subsequently the Jost functions. From the transformation introduced in Section~4,
\begin{equation}
A_l(E,r)
=
\widehat A_l(E,r),
\end{equation}
and
\begin{equation}
B_l(E,r)
=
k^{2l+1}
\widehat B_l(E,r).
\label{reconstruction_B}
\end{equation}
Taking the limit \(r\to\infty\), one obtains
\begin{equation}
A_l(E,\infty)
=
\widehat A_l(E,\infty),
\end{equation}
and
\begin{equation}
B_l(E,\infty)
=
k^{2l+1}
\widehat B_l(E,\infty).
\end{equation}
The incoming and outgoing Jost functions are defined by
\begin{equation}
f_l^{(\mathrm{in})}(E)
=
\frac12
\left[
A_l(E,\infty)
+
i\,B_l(E,\infty)
\right],
\end{equation}
and
\begin{equation}
f_l^{(\mathrm{out})}(E)
=
\frac12
\left[
A_l(E,\infty)
-
i\,B_l(E,\infty)
\right].
\end{equation}
Substituting the transformed variables yields
\begin{equation}
\boxed{
f_l^{(\mathrm{in})}(E)
=
\frac12
\left[
\widehat A_l(E,\infty)
+
i\,k^{2l+1}
\widehat B_l(E,\infty)
\right],
}
\label{factorized_jost_in}
\end{equation}
and
\begin{equation}
\boxed{
f_l^{(\mathrm{out})}(E)
=
\frac12
\left[
\widehat A_l(E,\infty)
-
i\,k^{2l+1}
\widehat B_l(E,\infty)
\right].
}
\label{factorized_jost_out}
\end{equation}
Since
\begin{equation}
\widehat A_l(E,\infty),
\qquad
\widehat B_l(E,\infty)
\end{equation}
are analytic functions of \(E\), the only remaining source of
multivaluedness is the explicit factor
\begin{equation}
k^{2l+1}.
\end{equation}
The branching structure of the Jost functions is therefore completely
isolated and represented by a single explicit momentum factor.

\subsection{Geometric interpretation of the factorization}

The factorization introduced in Section~4 admits a natural geometric interpretation. The original Jost functions are multivalued because they are defined on the two-sheeted Riemann surface associated with the square-root mapping
\begin{equation}
E \mapsto \sqrt{E}.
\end{equation}
The factors
\begin{equation}
k^{l+1},
\qquad
k^{-l},
\end{equation}
carry all information about the branching structure. After these factors are removed, the remaining functions
\begin{equation}
\widehat{A}_l(E,r),
\qquad
\widehat{B}_l(E,r),
\end{equation}
become single-valued analytic functions on the complex energy plane. The factorization procedure therefore transforms the original multivalued scattering problem into a differential system analytic in the energy variable. In this sense, the factorization removes the branching structure generated by the momentum variable and trivializes the analytic continuation around the branch point.

\subsection{Physical interpretation}

The geometry of the energy Riemann surface is closely connected with the analytic structure of the scattering matrix.
\\
On the physical sheet:

\begin{itemize}
\item bound states correspond to poles located on the negative real energy axis;

\item scattering states correspond to positive real energies.
\end{itemize}
\noindent
On the second sheet:

\begin{itemize}
\item resonance poles appear at complex energies;

\item the imaginary part of the energy determines the decay width of the resonance.
\end{itemize}
\noindent
The results obtained in the present work therefore establish a precise
separation between the analytic and topological aspects of the
scattering problem. The transformed coefficient functions
\begin{equation}
\widehat A_l(E,r),
\qquad
\widehat B_l(E,r),
\end{equation}
are single-valued analytic functions of the complex energy variable,
as guaranteed by the Poincar\'e--Picard theorem. The entire
multivalued structure of the physical Jost functions is carried by the
explicit factor \(k^{2l+1}\), which reflects the topology of the
two-sheeted Riemann surface associated with the threshold branch point
at \(E=0\). The factorization procedure therefore provides a mathematically
transparent description of the analytic continuation of the Jost
functions and of the geometric origin of resonances, bound states, and
sheet transitions in quantum scattering theory.

\section{Conclusion}

In the present work, the analytic structure of the Jost functions has been investigated through the framework of differential equations with analytic parameter dependence. The central objective was to establish the analyticity of suitably transformed coefficient functions associated with the Jost representation by means of the Poincaré--Picard theorem.

We began by considering the radial Schrödinger equation for a short-range central potential and introduced the incoming and outgoing Jost functions through the asymptotic decomposition of the regular radial solution. Using the method of variation of parameters, the scattering problem was reformulated as a coupled first-order differential system for the coefficient functions
\begin{equation}
A_l(E,r),
\qquad
B_l(E,r).
\end{equation}

The analytic structure of the energy variable was then examined in detail. It was shown that the relation
\begin{equation}
k
=
\sqrt{\frac{2\mu E}{\hbar^2}}
\end{equation}
introduces a square-root branching singularity at
\begin{equation}
E=0,
\end{equation}
thereby generating a two-sheeted Riemann surface for the complex energy variable. The multivaluedness of the original Jost functions was traced explicitly to the odd powers of the momentum contained in the Ricatti--Bessel and Ricatti--Neumann functions.

To isolate the branching structure, the Ricatti functions were factorized into explicit momentum-dependent factors and reduced functions depending only on even powers of the momentum. This factorization naturally induced transformed coefficient functions
\begin{equation}
\widehat{A}_l(E,r),
\qquad
\widehat{B}_l(E,r),
\end{equation}
for which the corresponding first-order differential system possesses coefficients analytic in the complex energy variable.

The transformed system was then rewritten in matrix form, making it possible to apply the Poincaré--Picard theorem on analytic dependence of solutions on complex parameters. Since the coefficient matrix was shown to be analytic in the energy variable and the initial conditions were independent of the energy, the theorem implied directly that the transformed coefficient functions are single-valued analytic functions of the complex energy variable.

The geometric interpretation of these results was subsequently discussed in terms of the Riemann surface associated with the square-root momentum mapping. In particular, it was shown that the factorization procedure isolates the nontrivial branching structure carried by the momentum variable and separates it from the analytic component of the scattering problem, transforming the original multivalued scattering problem into a differential system analytic on the complex energy plane.

A key outcome of the analysis is that the multivalued character of the
physical Jost functions can be represented explicitly through the factor
\begin{equation}
k^{2l+1},
\end{equation}
while the remaining coefficient functions are analytic in the complex
energy variable. The branch-point structure associated with the
scattering threshold is therefore separated from the analytic dynamics
described by the transformed differential system.

The present approach provides an alternative perspective on the analytic theory of scattering. Rather than deriving analyticity primarily from integral equations or from direct asymptotic analysis, the method emphasizes the analytic dependence of differential equations on parameters. In this way, the analyticity of the transformed coefficient functions emerges naturally as a consequence of classical existence and analyticity theorems for ordinary differential equations.

Several possible extensions of the present work may be considered. First, the formalism may be generalized to multichannel scattering problems, where the energy Riemann surface possesses a considerably richer topology due to the presence of multiple threshold branch points. Second, the method may be extended to Coulomb-type long-range interactions, where logarithmic branching structures appear in addition to the square-root singularities discussed in the present work. Third, the relation between the present factorization procedure and monodromy theory may provide additional geometric insight into the analytic continuation properties of scattering amplitudes across different sheets of the energy Riemann surface. Finally, the analytic framework developed here may be useful in the numerical study of resonances and pole trajectories, particularly in methods based on analytic continuation into the complex energy plane.

In conclusion, the factorization of the momentum dependence together
with the Poincar\'e--Picard theorem provides a mathematically
transparent framework for the analytic structure of the Jost
representation. The transformed coefficient functions are shown to be
single-valued analytic functions of the complex energy variable, while
the entire branching structure of the physical Jost functions is
isolated into an explicit factor \(k^{2l+1}\) reflecting the topology
of the two-sheeted energy Riemann surface. This establishes a direct
connection between quantum scattering theory, analytic continuation,
Riemann-surface geometry, and the classical theory of parameter-dependent
ordinary differential equations.

\paragraph{Acknowledgements} It is a pleasure for the author to thank Sergei Rakitianski for introducing him to the Jost functions. The author acknowledges financial support from the Department of Physics at the University of Pretoria.

\clearpage

\bibliographystyle{utphys}
\bibliography{sample}

\end{document}